\renewcommand{\theenumii}{\@roman\c@enumii}
\newtheorem{lemma}{Lemma}
\newtheorem{theorem}{Theorem}
\newtheorem{assumption}{Assumption}
\begin{document}

\title{Minimax theorem and Nash equilibrium of symmetric multi-players zero-sum game with two strategic variables}

\author{%
Masahiko Hattori\thanks{mhattori@mail.doshisha.ac.jp}\\[.01cm]
Faculty of Economics, Doshisha University,\\
Kamigyo-ku, Kyoto, 602-8580, Japan,\\[.1cm]
Atsuhiro Satoh\thanks{atsatoh@hgu.jp}\\[.01cm]
Faculty of Economics, Hokkai-Gakuen University,\\[.02cm]
Toyohira-ku, Sapporo, Hokkaido, 062-8605, Japan,\\[.01cm]
\textrm{and} \\[.1cm]
Yasuhito Tanaka\thanks{yasuhito@mail.doshisha.ac.jp}\\[.01cm]
Faculty of Economics, Doshisha University,\\
Kamigyo-ku, Kyoto, 602-8580, Japan.\\}

\date{}

\maketitle
\thispagestyle{empty}

\vspace{-1.4cm}

\begin{abstract}
We consider a symmetric multi-players zero-sum game with two strategic variables. There are $n$ players, $n\geq 3$. Each player is denoted by $i$. Two strategic variables are $t_i$ and $s_i$, $i\in \{1, \dots, n\}$. They are related by invertible functions. Using the minimax theorem by \cite{sion} we will show that Nash equilibria in the following states are equivalent.
\begin{enumerate}
	\item All players choose $t_i,\ i\in \{1, \dots, n\}$, (as their strategic variables).
	\item Some players choose $t_i$'s and the other players choose $s_i$'s.
	\item All players  choose $s_i,\ i\in \{1, \dots, n\}$.
\end{enumerate}

\end{abstract}

\begin{description}
	\item[Keywords:] symmetric multi-players zero-sum game, Nash equilibrium, two strategic variables
\end{description}

\begin{description}
	\item[JEL Classification:] C72
\end{description}

\section{Introduction}
We consider a symmetric multi-players zero-sum game with two strategic variables. There are $n$ players, $n\geq 3$. Each player is denoted by $i$. Two strategic variables are $t_i$ and $s_i$, $i\in \{1, \dots, n\}$. They are related by invertible functions. Using the minimax theorem by \cite{sion} we will show that Nash equilibria in the following states are equivalent.
\begin{enumerate}
	\item All players choose $t_i,\ i\in \{1, \dots, n\}$, (as their strategic variables).
	\item Some players choose $t_i$'s and the other players choose $s_i$'s.
	\item All players  choose $s_i,\ i\in \{1, \dots, n\}$.
\end{enumerate}

In the next section we present a model of this paper and prove some preliminary results which are variations of Sion's minimax theorem. In Section 3 we will show the main results. An example of a multi-players zero-sum game with two strategic variables is a relative profit maximization game in an oligopoly with differentiated goods. See Section 4.

\section{The model and the  minimax theorem}

We consider a symmetric multi-players zero-sum game with two strategic variables. There are $n$ players, $n\geq 3$. Two strategic variables are $t_i$ and $s_i$, $i\in \{1, \dots, n\}$. $t_i$ is chosen from $T_i$ and $s_i$ is chosen from $S_i$. $T_i$ and $S_i$ are convex and compact sets in linear topological spaces, respectively, for each $i\in \{1, \dots, n\}$. We denote $N=\{1, \dots, n\}$. The relations of the strategic variables are represented by
\begin{equation*}
s_i=f_i(t_1, \dots, t_n),\ i\in N,
\end{equation*}
and
\begin{equation*}
t_i=g_i(s_1, \dots, s_n),\ i\in N.
\end{equation*}
$f_i(t_1, \dots, t_n)$ and $g_i(s_1, \dots, s_n)$ are continuous invertible functions, and so they are one-to-one and onto functions. Let $M=\{1, \dots, m\},\ 0\leq m\leq n$, be a subset of $N$, and denote $N-M=\{m+1, \dots, n\}$. When $n-m$ players in $N-M$ choose $s_i$'s, $t_i$'s for them are determined according to
\[
\left\{
\begin{array}{l}
t_{m+1}=g_{m+1}(f_1(t_1, \dots, t_m, t_{m+1}, \dots, t_n),\dots, f_m(t_1, \dots, t_m, t_{m+1}, \dots, t_n), s_{m+1}, \dots, s_n)\\
\dots\\
t_n=g_n(f_1(t_1, \dots, t_m, t_{m+1}, \dots, t_n),\dots, f_m(t_1, \dots, t_m, t_{m+1}, \dots, t_n), s_{m+1}, \dots, s_n).
\end{array}\right.
\]
We denote these $t_i$'s by $t_i(t_1, \dots, t_m, s_{m+1}, \dots, s_n)$.

When all players choose $s_i$'s, $i\in N$, $t_i$'s for them are determined according to
\[
\left\{
\begin{array}{l}
t_{1}=g_{1}(s_1, \dots, s_n)\\
\dots\\
t_n=g_n(s_1, \dots, s_n).
\end{array}\right.
\]
Denote these $t_i$'s by $t_i(s_1, \dots, s_n)$.

The payoff function of Player $i$ is $u_i,\ i\in N$. It is written as
\[u_i(t_1, \dots, t_n).\]
We assume 
\begin{quote}
$u_i:T_1\times \dots\times T_n\Rightarrow \mathbb{R}$ for each $i\in N$ is continuous on $T_1\times \dots \times T_n$. Thus, it is continuous on $S_1\times \dots \times S_n$ through $f_i,\ i\in N$. It is quasi-concave on $T_i$ and $S_i$ for a strategy of each other player, and quasi-convex on $T_j,\ j\neq i$ and $S_j,\ j\neq i$ for each $t_i$ and $s_i$.
\end{quote}
We do not assume differentiability of the payoff functions.

Symmetry of the game means that the payoff functions of all players are symmetric and in the payoff function of each Player $i$, Players $j$ and $k,\ j, k\neq i$, are interchangeable. $f_i$'s and $g_i$'s are symmetric. Since the game is a zero-sum game, the sum of the values of the payoff functions of the players is zero.  All $T_i$'s are identical, and all $S_i$'s are identical. Denote them by $T$ and $S$.

Sion's minimax theorem (\cite{sion}, \cite{komiya}, \cite{kind}) for a continuous function is stated as follows.
\begin{lemma}
Let $X$ and $Y$ be non-void convex and compact subsets of two linear topological spaces, and let $f:X\times Y \rightarrow \mathbb{R}$ be a function that is continuous and quasi-concave in the first variable and continuous and quasi-convex in the second variable. Then
\[\max_{x\in X}\min_{y\in Y}f(x,y)=\min_{y\in Y}\max_{x\in X}f(x,y).\] \label{l1}
\end{lemma}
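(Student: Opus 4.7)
The plan is to prove this by the standard contradiction argument originating in Sion's paper. The easy direction $\max_{x\in X}\min_{y\in Y} f(x,y)\leq \min_{y\in Y}\max_{x\in X} f(x,y)$ is immediate: for any $(x_0,y_0)\in X\times Y$ one has $\min_{y} f(x_0,y)\leq f(x_0,y_0)\leq \max_{x} f(x,y_0)$, and taking the max over $x_0$ on the left and the min over $y_0$ on the right yields the inequality. Existence of the max and min is guaranteed because $f$ is continuous and $X,Y$ are compact, while quasi-concavity and quasi-convexity ensure the extrema of the sectional functions are actually attained.

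For the reverse direction I would argue by contradiction: suppose there exists a real number $\alpha$ with
\[
\max_{x\in X}\min_{y\in Y}f(x,y) < \alpha < \min_{y\in Y}\max_{x\in X}f(x,y).
\]
For each $y\in Y$ define the upper level set $C_y=\{x\in X : f(x,y)\geq \alpha\}$. Continuity of $f(\cdot,y)$ makes $C_y$ closed in the compact space $X$, quasi-concavity of $f(\cdot,y)$ makes it convex, and the right-hand inequality forces $C_y\neq\emptyset$. If one can show that the family $\{C_y\}_{y\in Y}$ has the finite intersection property, then compactness of $X$ yields some $x^\star\in\bigcap_{y\in Y} C_y$, so that $f(x^\star,y)\geq \alpha$ for every $y$; this contradicts the left-hand inequality and completes the proof.

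Establishing the finite intersection property is the main obstacle, and it is here that both hypotheses on $f$ must be used simultaneously. I would proceed by induction on the number $k$ of sets $C_{y_1},\dots,C_{y_k}$. The base case $k=1$ is immediate, since $\max_x f(x,y_1)\geq \min_y\max_x f(x,y)>\alpha$ guarantees $C_{y_1}\neq\emptyset$. The nontrivial step is $k=2$: given $y_1,y_2$, one must produce $x$ with $f(x,y_i)\geq\alpha$ for both $i$. The standard technique is to restrict attention to the line segment $[y_1,y_2]\subset Y$ and exploit quasi-convexity of $f(x,\cdot)$ on that segment, combined with quasi-concavity in $x$, to rule out the possibility that the nonempty closed convex sets $\{x:f(x,y_1)\geq\alpha\}$ and $\{x:f(x,y_2)\geq\alpha\}$ are disjoint.

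The passage from $k=2$ to general finite $k$ is then handled by a second inductive argument in which one merges level sets two at a time, again using the convexity of $Y$ and the quasi-convexity of $f$ in the second variable. I would follow the streamlined presentation in \cite{komiya}, which carries out this scheme directly in the topological-vector-space setting without passing to any finite-dimensional reduction and without invoking a KKM-type lemma, since the hypotheses in Lemma~\ref{l1} already match those needed for that route. The delicate point throughout is that continuity cannot be relaxed to semi-continuity here, and the interplay of quasi-concavity, quasi-convexity, and continuity must be tracked carefully so that no extra assumption such as differentiability or strict concavity is smuggled in.
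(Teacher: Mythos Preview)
The paper does not prove Lemma~\ref{l1} at all: it is stated as Sion's minimax theorem with citations to \cite{sion}, \cite{komiya}, and \cite{kind}, followed by the remark ``We follow the description of Sion's theorem in \cite{kind}.'' It is treated as an imported result, and the paper's own work begins with Lemmas~\ref{l2} and~\ref{l3}.

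Your sketch is the standard Sion--Komiya contradiction argument and is essentially correct as an outline, so in that sense it agrees with the proofs in the papers the authors cite rather than differing from them. Two small remarks if you intend to flesh it out. First, your aside that ``continuity cannot be relaxed to semi-continuity here'' is not quite right: Sion's original theorem, and Komiya's proof, require only upper semicontinuity and quasi-concavity in $x$ and lower semicontinuity and quasi-convexity in $y$; the present paper states the continuous version merely because that is all it needs. Second, the passage from the two-point case to the general finite case is not literally ``merging level sets two at a time''; in Komiya's argument one applies the inductive hypothesis to the $(k-1)$-point subfamilies and then uses the two-point case on a suitable segment in $Y$ to produce a common point. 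That step deserves one more sentence of precision if you write it out in full.
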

We follow the description of Sion's theorem in \cite{kind}.

Applying this lemma to the situation of this paper such that $m$ players choose $t_i$'s and $n-m$ players choose $s_i$'s as their strategic variables, we have the following relations.
\begin{align*}
\max_{t_i\in T}\min_{t_j\in T}u_i(t_i, t_j, \mathbf{t}_k, \mathbf{t}_l)=\min_{t_j\in T}\max_{t_i\in T}u_i(t_i, t_j, \mathbf{t}_k, \mathbf{t}_l).
\end{align*}
\begin{align*}
\max_{t_i\in T}\min_{s_j\in S}u_i(t_i, t_j(t_i, s_j,\mathbf{t}_k, \mathbf{t}_l), \mathbf{t}_k, \mathbf{t}_l)=\min_{s_j\in S}\max_{t_i\in T}u_i(t_i, t_j(t_i, s_j,\mathbf{t}_k, \mathbf{t}_l), \mathbf{t}_k, \mathbf{t}_l),
\end{align*}
where $\mathbf{t}_k$ is a vector of $t_k,\ k\in M$, of the players other than Players $i$ and $j$ who choose $t_k$'s as their strategic variables. On the other hand, $\mathbf{t}_l$ is a vector of $t_l,\ l\in N-M$, of the players other than Player  $j$ who choose $s_l$'s as their strategic variables. Also, relations which are symmetric to them hold. $u_i(t_i, t_j, \mathbf{t}_k, \mathbf{t}_l)$ is the payoff of Player $i$ when Players $i$ and $j$ choose $t_i$ and $t_j$. On the other hand, $u_i(t_i, t_j(t_i, s_j, \mathbf{t}_k, \mathbf{t}_l), \mathbf{t}_k, \mathbf{t}_l)$ means the payoff of Player $i$ when he chooses $t_i$ and Player $j$ chooses $s_j$.

Further we show the following results.
\begin{lemma}
\begin{align*}
&\max_{t_j\in T}\min_{t_i\in T}u_j(t_i, t_j, \mathbf{t}_k, \mathbf{t}_l)=\max_{s_j\in S}\min_{t_i\in T}u_j(t_i, t_j(t_i, s_j, \mathbf{t}_k, \mathbf{t}_l), \mathbf{t}_k, \mathbf{t}_l)\\
&=\min_{t_i\in T}\max_{s_j\in S}u_j(t_i, t_j(t_i, s_j, \mathbf{t}_k, \mathbf{t}_l), \mathbf{t}_k, \mathbf{t}_l)=\min_{t_i\in T}\max_{t_j\in T}u_j(t_i, t_j,\mathbf{t}_k, \mathbf{t}_l),
\end{align*}
\label{l2}
\label{lemma2}
\end{lemma}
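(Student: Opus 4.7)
The proof naturally divides into three parts. Writing (A), (B), (C), (D) for the four quantities in the statement in left-to-right order, I will establish A $=$ D by Sion's theorem, B $=$ C by Sion's theorem, and C $=$ D by a bijection argument; together these yield the full chain of equalities.

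For A $=$ D, observe that $(t_j, t_i) \mapsto u_j(t_i, t_j, \mathbf{t}_k, \mathbf{t}_l)$ is continuous on $T\times T$, quasi-concave in $t_j$ for each $t_i$, and quasi-convex in $t_i$ for each $t_j$ — these are exactly the standing hypotheses on the payoff $u_j$ once the variables of the other players are frozen at $\mathbf{t}_k, \mathbf{t}_l$. Lemma \ref{l1} therefore gives
\[
\max_{t_j\in T}\min_{t_i\in T}u_j(t_i, t_j, \mathbf{t}_k, \mathbf{t}_l)=\min_{t_i\in T}\max_{t_j\in T}u_j(t_i, t_j, \mathbf{t}_k, \mathbf{t}_l).
\]
The same theorem applied to $(s_j, t_i) \mapsto u_j(t_i, t_j(t_i, s_j, \mathbf{t}_k, \mathbf{t}_l), \mathbf{t}_k, \mathbf{t}_l)$, which is continuous on $S\times T$, quasi-concave in $s_j$ (Player $j$'s own strategic variable under the alternative formulation), and quasi-convex in $t_i$, delivers B $=$ C.

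The remaining identity C $=$ D is closed by a coordinate bijection. Fix $t_i \in T$ (with $\mathbf{t}_k, \mathbf{t}_l$ also fixed throughout). By the invertibility of the transformations $f_i, g_i$, the assignment $s_j \mapsto t_j(t_i, s_j, \mathbf{t}_k, \mathbf{t}_l)$ is a continuous bijection from $S$ onto $T$. Consequently
\[
\max_{s_j\in S}u_j(t_i, t_j(t_i, s_j, \mathbf{t}_k, \mathbf{t}_l), \mathbf{t}_k, \mathbf{t}_l)=\max_{t_j\in T}u_j(t_i, t_j, \mathbf{t}_k, \mathbf{t}_l),
\]
because both sides are the supremum of the same function over the same range of values. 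Taking $\min_{t_i\in T}$ on each side yields C $=$ D, and combining with A $=$ D and B $=$ C produces the four-term chain.

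The main obstacle is the last paragraph: extracting from the invertibility of the full vector-valued maps $f$ and $g$ (rather than from any coordinate-wise assumption) a one-to-one correspondence between the single coordinates $s_j$ and $t_j$ when the remaining strategic choices are held fixed. Once this bijection is in hand, the remainder is nothing more than two appeals to Lemma \ref{l1}.
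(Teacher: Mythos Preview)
Your argument is correct and in fact more economical than the paper's. Both proofs rest on the same two ingredients---Sion's theorem (Lemma~\ref{l1}) and the observation that, with $t_i,\mathbf{t}_k,\mathbf{t}_l$ held fixed, the map $s_j\mapsto t_j(t_i,s_j,\mathbf{t}_k,\mathbf{t}_l)$ sweeps out all of $T$ so that $\max_{s_j}$ and $\max_{t_j}$ coincide (your C $=$ D). Where you differ is in what you do with these ingredients: you invoke Sion twice (for A $=$ D and B $=$ C) and the bijection once, which already closes the chain. The paper instead devotes the bulk of its proof to establishing A $=$ B directly, by fixing $t_j$ (respectively $s_j$) at the value corresponding to the argmin $\tilde t_i(s_j)$ (respectively $\tilde t_i(t_j)$) and comparing the two inner minima; it then proves C $=$ D by the same bijection remark you use, and tacitly relies on the Sion identities recorded just before the lemma to join $\{A,B\}$ with $\{C,D\}$. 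Your route avoids that A $=$ B detour entirely, at the cost of leaning on the quasi-concavity/convexity hypotheses in both the $t$- and $s$-parametrizations (which the paper does assume). The one point you rightly flag as delicate---extracting a coordinate-wise bijection from invertibility of the full vector maps---is handled in the paper only informally (``any value of $s_j$ can be realized by appropriately choosing $t_j$''), so you are on equal footing there.
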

$u_j(t_i, t_j, \mathbf{t}_k, \mathbf{t}_l)$ is the payoff of Player $j$ when Players $i$ and $j$ choose $t_i$ and $t_j$. On the other hand, $u_j(t_i, t_j(t_i, s_j, \mathbf{t}_k, \mathbf{t}_l), \mathbf{t}_k, \mathbf{t}_l)$ means the payoff of Player $j$ when he chooses $s_j$ and Player $i$ chooses $t_i$.
\begin{proof}
$\min_{t_i\in T}u_j(t_i, t_j(t_i, s_j, \mathbf{t}_k, \mathbf{t}_l), \mathbf{t}_k, \mathbf{t}_l)$ is the minimum of $u_j$ with respect to $t_i$ given $s_j$. Let $\tilde{t}_i(s_j)=\arg\min_{t_i\in T}u_j(t_i, t_j(t_i, s_j, \mathbf{t}_k, \mathbf{t}_l), \mathbf{t}_k, \mathbf{t}_l)$, and fix the value of $t_j$ at
\begin{equation}
t_j^0=g_j(f_i(\tilde{t}_i(s_j), t_j^0, \mathbf{t}_k, \mathbf{t}_l),s_j, \mathbf{f}_k, \mathbf{s}_l),\label{tc}
\end{equation}
where $\mathbf{f}_k$ denotes a vector of the values of $s_k$'s of players who choose $t_k$'s, and $\mathbf{s}_l$ denotes a vector of the values of $s_l$'s of players who choose $s_l$'s. Then, we have
\begin{align*}
\min_{t_i\in T}u_j(t_i, t_j^0,\mathbf{t}_k, \mathbf{t}_l)\leq u_j(\tilde{t}_i(s_j), t_j^0, \mathbf{t}_k, \mathbf{t}_l)=\min_{t_i\in T}u_j(t_i,t_j(t_i,s_j,\mathbf{t}_k,\mathbf{t}_l),\mathbf{t}_k,\mathbf{t}_l),
\end{align*}
where $\min_{t_i\in T}u_j(t_i, t_j^0,\mathbf{t}_k, \mathbf{t}_l)$ is the minimum of $u_j$ with respect to $t_i$ given the value of $t_j$ at $t_j^0$. We assume that $\tilde{t}_i(s_j)=\arg\min_{t_i\in T}u_j(u_j(t_i, t_j(t_i, s_j, \mathbf{t}_k, \mathbf{t}_l), \mathbf{t}_k, \mathbf{t}_l))$ is single-valued. By the maximum theorem and continuity of $u_j$, $\tilde{t}_i(s_j)$ is continuous. Then, any value of $t_j^0$ can be realized by appropriately choosing $s_j$ according to (\ref{tc}). Therefore,
\begin{equation}
\max_{t_j\in T}\min_{t_i\in T}u_j(t_i, t_j,\mathbf{t}_k, \mathbf{t}_l)\leq \max_{s_j\in S}\min_{t_i\in T}u_j(t_i,t_j(t_i,s_j,\mathbf{t}_k,\mathbf{t}_l),\mathbf{t}_k,\mathbf{t}_l).\label{4-1}
\end{equation}

On the other hand, $\min_{t_i\in T}u_j(t_i, t_j, \mathbf{t}_k, \mathbf{t}_l)$ is the minimum of $u_j$ with respect to $t_i$ given $t_j$. Let $\tilde{t}_i(t_j)=\arg\min_{t_i\in T}u_j(t_i, t_j,\mathbf{t}_k, \mathbf{t}_l)$, and fix the value of $s_j$ at
\begin{equation}
s_j^0=f_j(\tilde{t}_i(t_j),t_j,\mathbf{t}_k, \mathbf{t}_l).\label{sc}
\end{equation}
Then, we have
\begin{align*}
\min_{t_i\in T}u_j(t_i, t_j(t_i,s_j^0,\mathbf{t}_k, \mathbf{t}_l),\mathbf{t}_k, \mathbf{t}_l)\leq u_j(\tilde{t}_i(t_j),t_j(\tilde{t}_i(t_j),s_j^0,\mathbf{t}_k, \mathbf{t}_l),\mathbf{t}_k, \mathbf{t}_l)=\min_{t_i\in T}u_j(t_i, t_j, \mathbf{t}_k, \mathbf{t}_l),
\end{align*}
where $\min_{t_i\in T}u_j(t_i, t_j(t_i,s_j^0,\mathbf{t}_k, \mathbf{t}_l),\mathbf{t}_k, \mathbf{t}_l)$ is the minimum of $u_j$ with respect to $t_i$ given the value of $s_j$ at $s_j^0$. We assume that $\tilde{t}_i(t_j)=\arg\min_{t_i\in T}u_j(t_i, t_j,\mathbf{t}_k, \mathbf{t}_l)$ is single-valued. By the maximum theorem and continuity of $u_j$, $\tilde{t}_i(t_j)$ is continuous. Then, any value of $s_j^0$ can be realized by appropriately choosing $t_j$ according to (\ref{sc}). Therefore,
\begin{equation}
\max_{s_j\in S}\min_{t_i\in T}u_j(t_i,t_j(t_i, s_j, \mathbf{t}_k, \mathbf{t}_l) \mathbf{t}_k, \mathbf{t}_l)\leq \max_{t_j\in T}\min_{t_i\in T}u_j(t_i, t_j,\mathbf{t}_k, \mathbf{t}_l).\label{4-2}
\end{equation}
Combining (\ref{4-1}) and (\ref{4-2}), we get
\[\max_{s_j\in S}\min_{t_i\in T}u_j(t_i,t_j(t_i, s_j, \mathbf{t}_k, \mathbf{t}_l) \mathbf{t}_k, \mathbf{t}_l)=\max_{t_j\in T}\min_{t_i\in T}u_j(t_i, t_j,\mathbf{t}_k, \mathbf{t}_l).\]
Since any value of $s_j$ can be realized by appropriately choosing $t_j$, we have
\[\max_{s_j\in S}u_j(t_i,t_j(t_i,s_j,\mathbf{t}_k, \mathbf{t}_l),\mathbf{t}_k, \mathbf{t}_l)=\max_{t_j\in T}u_j(t_i, t_j, \mathbf{t}_k, \mathbf{t}_l).\]
Thus,
\[\min_{t_i\in T}\max_{s_j\in S}u_j(t_i,t_j(t_i,s_j,\mathbf{t}_k, \mathbf{t}_l),\mathbf{t}_k, \mathbf{t}_l)=\min_{t_i\in T}\max_{t_j\in T}u_j(t_i, t_j, \mathbf{t}_k, \mathbf{t}_l).\]
Therefore,
\begin{align*}
&\max_{t_j\in T}\min_{t_i\in T}u_j(t_i, t_j,\mathbf{t}_k, \mathbf{t}_l)=\max_{s_j\in S}\min_{t_i\in T}u_j(t_i, t_j(t_i,s_j,\mathbf{t}_k, \mathbf{t}_l),\mathbf{t}_k, \mathbf{t}_l)\\
=&\min_{t_i\in T}\max_{s_j\in S}u_j(t_i,t_j(t_i,s_j,\mathbf{t}_k, \mathbf{t}_l),\mathbf{t}_k, \mathbf{t}_l)=\min_{t_i\in T}\max_{t_j\in T}u_j(t_i,t_j,\mathbf{t}_k, \mathbf{t}_l).
\end{align*}
\end{proof}
\begin{lemma}
\begin{align*}
&\min_{t_j\in T}\max_{t_i\in T}u_i(t_i, t_j, \mathbf{t}_k, \mathbf{t}_l)=\min_{s_j\in S}\max_{t_i\in T}u_i(t_i, t_j(t_i, s_j, \mathbf{t}_k, \mathbf{t}_l), \mathbf{t}_k, \mathbf{t}_l)\\
&=\max_{t_i\in T}\min_{s_j\in S}u_i(t_i, t_j(t_i, s_j, \mathbf{t}_k, \mathbf{t}_l), \mathbf{t}_k, \mathbf{t}_l)=\max_{t_i\in T}\min_{t_j\in T}u_i(t_i, t_j,\mathbf{t}_k, \mathbf{t}_l),
\end{align*}
\label{l3}
\end{lemma}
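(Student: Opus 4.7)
The plan is to mirror the proof of Lemma~\ref{l2}, swapping the roles of max and min throughout and replacing $u_j$ by $u_i$. Label the four quantities from left to right as
\[A=\min_{t_j}\max_{t_i}u_i(t_i,t_j,\mathbf{t}_k,\mathbf{t}_l),\quad B=\min_{s_j}\max_{t_i}u_i(t_i,t_j(t_i,s_j,\mathbf{t}_k,\mathbf{t}_l),\mathbf{t}_k,\mathbf{t}_l),\]
\[C=\max_{t_i}\min_{s_j}u_i(t_i,t_j(t_i,s_j,\mathbf{t}_k,\mathbf{t}_l),\mathbf{t}_k,\mathbf{t}_l),\quad D=\max_{t_i}\min_{t_j}u_i(t_i,t_j,\mathbf{t}_k,\mathbf{t}_l).\]
The identities $A=D$ and $B=C$ are already available: they are precisely the two Sion-minimax equalities displayed just after Lemma~\ref{l1}, both applicable because $u_i$ is quasi-concave in its own argument ($t_i$) and quasi-convex in the second argument ($t_j$, respectively $s_j$). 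Therefore the only new content to establish is $A=B$.

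For $A\geq B$, set $\tilde{t}_i(s_j)=\arg\max_{t_i\in T}u_i(t_i,t_j(t_i,s_j,\mathbf{t}_k,\mathbf{t}_l),\mathbf{t}_k,\mathbf{t}_l)$ and, mirroring~(\ref{tc}), let $t_j^0$ solve
\[t_j^0=g_j(f_i(\tilde{t}_i(s_j),t_j^0,\mathbf{t}_k,\mathbf{t}_l),s_j,\mathbf{f}_k,\mathbf{s}_l),\]
so that $t_j(\tilde{t}_i(s_j),s_j,\mathbf{t}_k,\mathbf{t}_l)=t_j^0$ by construction. Then
\[\max_{t_i\in T}u_i(t_i,t_j^0,\mathbf{t}_k,\mathbf{t}_l)\geq u_i(\tilde{t}_i(s_j),t_j^0,\mathbf{t}_k,\mathbf{t}_l)=\max_{t_i\in T}u_i(t_i,t_j(t_i,s_j,\mathbf{t}_k,\mathbf{t}_l),\mathbf{t}_k,\mathbf{t}_l)\geq B.\]
Assuming $\tilde{t}_i$ is single-valued, the maximum theorem yields its continuity, and together with continuity and invertibility of $f_j,g_j$ every $t_j\in T$ is realized as some $t_j^0$ for a suitable $s_j$. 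Hence $\max_{t_i}u_i(t_i,t_j,\mathbf{t}_k,\mathbf{t}_l)\geq B$ for every $t_j\in T$, and taking the minimum over $t_j$ gives $A\geq B$.

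The reverse inequality is obtained analogously, mirroring~(\ref{sc}). Set $\hat{t}_i(t_j)=\arg\max_{t_i\in T}u_i(t_i,t_j,\mathbf{t}_k,\mathbf{t}_l)$ and $s_j^0=f_j(\hat{t}_i(t_j),t_j,\mathbf{t}_k,\mathbf{t}_l)$, so that $t_j(\hat{t}_i(t_j),s_j^0,\mathbf{t}_k,\mathbf{t}_l)=t_j$. Then
\[\max_{t_i\in T}u_i(t_i,t_j(t_i,s_j^0,\mathbf{t}_k,\mathbf{t}_l),\mathbf{t}_k,\mathbf{t}_l)\geq u_i(\hat{t}_i(t_j),t_j,\mathbf{t}_k,\mathbf{t}_l)=\max_{t_i\in T}u_i(t_i,t_j,\mathbf{t}_k,\mathbf{t}_l)\geq A.\]
Again by continuity of $\hat{t}_i$ (maximum theorem) and continuity/invertibility of $f_j$, every $s_j\in S$ is realized as some $s_j^0$ by choosing $t_j$ appropriately, so taking the minimum over $s_j$ yields $B\geq A$. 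Chaining $A=D$, $A=B$, $B=C$ completes the proof.

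The main obstacle, exactly as in Lemma~\ref{l2}, is the surjectivity step: ensuring that the implicit map $s_j\mapsto t_j^0$ (resp.\ $t_j\mapsto s_j^0$) covers the whole compact set $T$ (resp.\ $S$). This rests on the single-valuedness assumption, the maximum theorem, and the invertibility/continuity hypotheses on $f_j,g_j$. Replacing min by max in the argmin/argmax does not affect any of the topological content of that step, so no genuinely new difficulty arises beyond what was handled in Lemma~\ref{l2}.
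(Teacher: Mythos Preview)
Your proof is correct and follows essentially the same approach as the paper: the core step $A=B$ is established via the same two inequalities, using the same argmax constructions $\tilde{t}_i(s_j)$ and $\hat{t}_i(t_j)$, the same implicit definitions of $t_j^0$ and $s_j^0$, and the same surjectivity argument based on single-valuedness, the maximum theorem, and invertibility of $f_j,g_j$. The only cosmetic difference is in how the chain is closed: the paper proves $C=D$ directly (observing that for fixed $t_i$ the map $s_j\mapsto t_j$ is a bijection, so $\min_{s_j}u_i=\min_{t_j}u_i$ pointwise) and then implicitly uses one Sion identity, whereas you invoke both Sion identities $A=D$ and $B=C$ displayed after Lemma~\ref{l1} and skip the separate $C=D$ argument.
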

\begin{proof}
$\max_{t_i\in T}u_i(t_i,t_j(t_i, s_j, \mathbf{t}_k, \mathbf{t}_l),\mathbf{t}_k, \mathbf{t}_l)$ is the maximum of $u_i$ with respect to $t_i$ given $s_j$. Let $\bar{t}_i(s_j)=\arg\max_{t_i\in T}u_i(t_i,t_j(t_i, s_j, \mathbf{t}_k, \mathbf{t}_l),\mathbf{t}_k, \mathbf{t}_l)$, and fix the value of $t_j$ at
\begin{equation}
t_j^0=g_j(f_i(\bar{t}_i(s_j), t_j^0, \mathbf{t}_k, \mathbf{t}_l),s_j, \mathbf{f}_k, \mathbf{s}_l),\label{tc1}
\end{equation}
where $\mathbf{f}_k$ denotes a vector of the values of $s_k$'s of players who choose $t_k$'s, and $\mathbf{s}_l$ denotes a vector of the values of $s_l$'s of players who choose $s_l$'s. Then, we have
\begin{align*}
\max_{t_i\in T}u_i(t_i, t_j^0, \mathbf{t}_k, \mathbf{t}_l)\geq u_i(\bar{t}_i(s_j),t_j^0,\mathbf{t}_k, \mathbf{t}_l)=\max_{t_i\in T}u_i(t_i,t_j(t_i, s_j, \mathbf{t}_k, \mathbf{t}_l),\mathbf{t}_k, \mathbf{t}_l),
\end{align*}
where $\max_{t_i\in T}u_i(t_i, t_j^0,\mathbf{t}_k, \mathbf{t}_l)$ is the maximum of $u_i$ with respect to $t_i$ given the value of $t_j$ at $t_j^0$. We assume that $\bar{t}_i(s_j)=\arg\max_{t_i\in T}u_i(t_i,t_j(t_i,s_j,\mathbf{t}_k, \mathbf{t}_l),\mathbf{t}_k, \mathbf{t}_l)$ is single-valued. By the maximum theorem and continuity of $u_i$, $\bar{t}_i(s_j)$ is continuous. Then, any value of $t_j^0$ can be realized by appropriately choosing $s_j$ according to (\ref{tc1}). Therefore,
\begin{equation}
\min_{t_j\in T}\max_{t_i\in T}u_i(t_i, t_j,\mathbf{t}_k, \mathbf{t}_l)\geq \min_{s_j\in S}\max_{t_i\in T}u_i(t_i, t_j(t_i, s_j, \mathbf{t}_k, \mathbf{t}_l),\mathbf{t}_k, \mathbf{t}_l).\label{4-11}
\end{equation}

On the other hand, $\max_{t_i\in T}u_i(t_i, t_j,\mathbf{t}_k, \mathbf{t}_l)$ is the maximum of $u_i$ with respect to $t_i$ given $t_j$. Let $\bar{t}_i(t_j)=\arg\max_{t_i\in T}u_i(t_i, t_j,\mathbf{t}_k, \mathbf{t}_l)$, and fix the value of $s_j$ at
\begin{equation}
s_j^0=f_j(\bar{t}_i(t_j), t_j,\mathbf{t}_k, \mathbf{t}_l).\label{sc1}
\end{equation}
Then, we have
\begin{align*}
\max_{t_i\in T}u_i(t_i,t_j(t_i,s_j^0,\mathbf{t}_k, \mathbf{t}_l),\mathbf{t}_k, \mathbf{t}_l)\geq u_i(\bar{t}_i(s_j),t_j(t_i,s_j^0,\mathbf{t}_k, \mathbf{t}_l),\mathbf{t}_k, \mathbf{t}_l)=\max_{t_i\in T}u_i(t_i, t_j,\mathbf{t}_k, \mathbf{t}_l),
\end{align*}
where $\max_{t_i\in T}u_i(t_i,t_j(t_i,s_j,\mathbf{t}_k, \mathbf{t}_l),\mathbf{t}_k, \mathbf{t}_l)$ is the maximum of $u_i$ with respect to $t_i$ given the value of $s_j$ at $s_j^0$. We assume that $\bar{t}_i(t_j)=\arg\max_{t_i\in T}u_i(t_i, t_j,\mathbf{t}_k, \mathbf{t}_l)$ is single-valued. By the maximum theorem and continuity of $u_i$, $\bar{t}_i(t_j)$ is continuous. Then, any value of $s_j^0$ can be realized by appropriately choosing $t_j$ according to (\ref{sc1}). Therefore,
\begin{equation}
\min_{s_j\in S}\max_{t_i\in T}u_i(t_i, t_j(t_i,s_j,\mathbf{t}_k, \mathbf{t}_l),\mathbf{t}_k, \mathbf{t}_l)\geq \min_{t_j\in T}\max_{t_i\in T}u_i(t_i, t_j,\mathbf{t}_k, \mathbf{t}_l).\label{4-21}
\end{equation}
Combining (\ref{4-11}) and (\ref{4-21}), we get
\[\min_{s_j\in S}\max_{t_i\in T}u_i(t_i, t_j(t_i,s_j,\mathbf{t}_k, \mathbf{t}_l),\mathbf{t}_k, \mathbf{t}_l)=\min_{t_j\in T}\max_{t_i\in T}u_i(t_i, t_j,\mathbf{t}_k, \mathbf{t}_l).\]
Since any value of $s_j$ can be realized by appropriately choosing $t_j$, we have
\begin{equation*}
\min_{s_j\in S}u_i(t_i,t_j(t_i,s_j,\mathbf{t}_k, \mathbf{t}_l),\mathbf{t}_k, \mathbf{t}_l)=\min_{t_j\in T}u_i(t_i, t_j,\mathbf{t}_k, \mathbf{t}_l).
\end{equation*}
Thus,
\[\max_{t_i\in T}\min_{s_j\in S}u_i(t_i,t_j(t_i,s_j,\mathbf{t}_k, \mathbf{t}_l),\mathbf{t}_k, \mathbf{t}_l)=\max_{t_i\in T}\min_{t_j\in T}u_i(t_i, t_j,\mathbf{t}_k, \mathbf{t}_l).\]
Therefore,
\begin{align*}
&\min_{t_j\in T}\max_{t_i\in T}u_i(t_i, t_j,\mathbf{t}_k, \mathbf{t}_l)=\min_{s_j\in S}\max_{t_i\in T}u_i(t_i, t_j(t_i, s_j, \mathbf{t}_k, \mathbf{t}_l),\mathbf{t}_k, \mathbf{t}_l),\\
=&\max_{t_i\in T}\min_{s_j\in S}u_i(t_i, t_j(t_i,s_j,\mathbf{t}_k, \mathbf{t}_l),\mathbf{t}_k, \mathbf{t}_l)=\max_{t_i\in T}\min_{t_j\in T}u_i(t_i, t_j,\mathbf{t}_k, \mathbf{t}_l).
\end{align*}
\end{proof}

\section{The main results}

In this section we present the main results of this paper. First we show
\begin{theorem}
The equilibrium where all players choose $t_i$'s is equivalent to the equilibrium where one player (Player $j$) chooses $s_j$ and all other players choose $t_i$'s as their strategic variables.\label{t1}
\end{theorem}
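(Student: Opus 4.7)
The plan is to show that a Nash equilibrium of state (A), which by symmetry is a profile $(t^*,\dots,t^*)$ with common strategy $t^*\in T$, corresponds under $s_j^*=f_j(t^*,\dots,t^*)$ to a Nash equilibrium of state (B) in which Player $j$ plays $s_j^*$ and every other player plays $t^*$, and conversely. The engine is a pairwise saddle point of $u_i$ in the two variables $(t_i,t_j)$, with the remaining $n-2$ players frozen at $t^*$, that transfers to the $(t_i,s_j)$-formulation via Lemmas~\ref{l2} and~\ref{l3}.

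First, I reduce to a symmetric equilibrium. By symmetry and zero-sum, at a Nash equilibrium of (A) each player plays a common $t^*$ and each payoff equals $0$. At the candidate profile in (B), the implicit equation for $t_j$ is satisfied by $t_j=t^*$, so the realized payoffs are again $0$; the task reduces to verifying the two best-response conditions in (B).

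Second, I establish the pairwise saddle point in (A). Fix all players outside $\{i,j\}$ at $t^*$. The inequality $u_i(t_i,t^*,t^*,\dots,t^*)\le 0$ for every $t_i$ is Player $i$'s best-response condition in (A). For $u_i(t^*,t_j,t^*,\dots,t^*)\ge 0$, I use symmetry: when only Player $j$ deviates, the $n-1$ other players share a common payoff $w(t_j)$, and zero-sum gives $u_j(t_j)+(n-1)w(t_j)=0$, so maximizing $u_j$ over $t_j$ coincides with minimizing $w(t_j)=u_i(t^*,t_j,t^*,\dots,t^*)$; Player $j$'s best-response condition in (A) supplies the bound. By Sion's theorem (Lemma~\ref{l1}), $(t^*,t^*)$ is a saddle point of the restricted $u_i$ with value $\max_{t_i}\min_{t_j}u_i=\min_{t_j}\max_{t_i}u_i=0$.

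Third, and most delicately, I transport this saddle point to the $(t_i,s_j)$-formulation. Lemma~\ref{l3} already gives $\max_{t_i}\min_{s_j}u_i(t_i,t_j(t_i,s_j,\cdot),\cdot)=\min_{s_j}\max_{t_i}u_i(\cdot)=0$. The minimization side of the candidate saddle $(t^*,s_j^*)$ is easy: with $t_i=t^*$ fixed, the bijection $s_j\leftrightarrow t_j$ reduces $\min_{s_j}u_i(t^*,t_j(t^*,s_j,\cdot),\cdot)$ to $\min_{t_j}u_i(t^*,t_j,\cdot)=0$, achieved at $t_j=t^*$, i.e.\ at $s_j=s_j^*$. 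The main obstacle is the maximization side: the quantity $\max_{t_i}u_i(t_i,t_j(t_i,s_j^*,\cdot),\cdot)$ is taken along the nontrivial section $\phi(t_i)=t_j(t_i,s_j^*,\cdot)$ rather than at fixed $t_j$, so the saddle bound $u_i(t_i,t^*,\cdot)\le 0$ does not apply directly. My resolution is to track the explicit correspondence $s_j^0=f_j(\bar t_i(t_j),t_j,\cdot)$ used inside the proof of Lemma~\ref{l3}: at the $(t_i,t_j)$-minimizer $t_j^{\ast\ast}=t^*$ one has $\bar t_i(t^*)=t^*$ by the saddle point just proved, so the matching $(t_i,s_j)$-minimizer is $s_j^{\ast\ast}=f_j(t^*,\dots,t^*)=s_j^*$, and the common minimax value $0$ is attained there, yielding $\max_{t_i}u_i(t_i,t_j(t_i,s_j^*,\cdot),\cdot)=0$, which is Player $i$'s best-response condition in (B). Player $j$'s best-response condition in (B) follows from the invertibility of $f_j$ in its $j$-th argument. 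The converse direction, from (B) back to (A), is entirely parallel with Lemma~\ref{l2} replacing Lemma~\ref{l3}.
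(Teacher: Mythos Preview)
Your overall architecture matches the paper's: establish the symmetric equilibrium $(t^*,\dots,t^*)$ with value $0$; use symmetry plus zero-sum to obtain the pairwise saddle point of $u_i$ in $(t_i,t_j)$ with the others frozen at $t^*$; transfer this to the $(t_i,s_j)$ formulation via Lemma~\ref{l3}; and handle Player~$j$'s best response by invertibility of $f_j$ in its $j$-th slot. Two points deserve comment.

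First, the paper does not merely assert the existence of a symmetric equilibrium of state~(A); it constructs $t^*$ as a fixed point of $t\mapsto\arg\max_{t_i}u_i(t_i,t,\dots,t)$ (continuity plus compactness of $T$). Your opening sentence ``by symmetry and zero-sum \dots each player plays a common $t^*$'' skips this step.

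Second, and more substantively, your resolution of the ``maximization side'' has a directional gap. The correspondence $t_j\mapsto s_j^0(t_j)=f_j(\bar t_i(t_j),t_j,\cdot)$ from the proof of Lemma~\ref{l3} yields, at $t_j=t^*$ with $\bar t_i(t^*)=t^*$, only the inequality
\[
\max_{t_i}u_i\bigl(t_i,t_j(t_i,s_j^*,\cdot),\cdot\bigr)\ \ge\ u_i\bigl(t^*,t_j(t^*,s_j^*,\cdot),\cdot\bigr)=\max_{t_i}u_i(t_i,t^*,\cdot)=0,
\]
which is the wrong direction: to conclude that $t^*$ is Player~$i$'s best response in (B) you need $\max_{t_i}u_i(\cdot,s_j^*,\cdot)\le 0$. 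The correspondence in Lemma~\ref{l3} is used there to prove $\min_{s_j}\max_{t_i}G\ge\min_{t_j}\max_{t_i}F$; it does not by itself identify $s_j^*$ as the $\arg\min_{s_j}\max_{t_i}G$. The paper closes this gap differently: writing $g(s_j)=\max_{t_i}u_i(t_i,t_j(t_i,s_j,\cdot),\cdot)$ and $h(s_j)=u_i(t^*,t_j(t^*,s_j,\cdot),\cdot)$, one has $g\ge h$ pointwise, $\min g=\min h=0$ (from Lemma~\ref{l3} and the bijection $s_j\leftrightarrow t_j$ at fixed $t_i=t^*$), and $\arg\min h=s^0(t^*)$. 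If $s^\dagger\in\arg\min g$, then $0=g(s^\dagger)\ge h(s^\dagger)\ge 0$ forces $s^\dagger\in\arg\min h$; under the single-valuedness hypothesis the paper imposes in Lemmas~\ref{l2}--\ref{l3}, this gives $s^\dagger=s^0(t^*)=s_j^*$ and hence $g(s_j^*)=0$, which is exactly Player~$i$'s best-response condition in (B). Replacing your correspondence argument by this $g\ge h$ argument (or, equivalently, invoking single-valuedness of the saddle coordinate directly) repairs the step.
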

\begin{proof}
\begin{enumerate}
	\item Consider a situation $(t_1,\dots, t_n)=(t,\dots,t)$, that is, all players choose the same value of $t_i$. Let
\[s^0(t)=f_i(t, \dots, t),\ i\in N.\]
By symmetry of the game
\[\max_{t_1\in T}u_1(t_1, t, \dots, t)=\dots =\max_{t_n\in T}u_n(t,\dots,t_n),\]
and
\[\arg\max_{t_1\in T}u_1(t_1, t, \dots, t)=\dots =\arg\max_{t_n\in T}u_n(t,\dots,t_n).\]
Consider the following function.
\[t\rightarrow \arg\max_{t_i\in T}u_i(t_i, t, \dots, t),\ i\in N.\]
Since this function is continuous and $T$ is compact, there exists a fixed point. Denote it by $t^*$. Then,
\[t^*\rightarrow \arg\max_{t_i\in T}u_i(t_i, t^*, \dots, t^*).\]
We have
\[\max_{t_i\in T}u_i(t_i, t^*, \dots, t^*)=0,\ \mathrm{for\ all}\ i\in N.\]

\item Because the game is zero-sum,
\[u_i(t_i, t^*, \dots, t^*)+\sum_{j=1, j\neq i}^nu_j(t_i,t^*, \dots,t^*)=0.\]
By symmetry
\[u_i(t_i, t^*, \dots, t^*)+(n-1)u_j(t_i,t^*, \dots,t^*)=0.\]
This means
\[u_i(t_i, t^*, \dots, t^*)=-(n-1)u_j(t_i,t^*, \dots,t^*).\]
and
\[\max_{t_i\in T}u_i(t_i, t^*, \dots, t^*)=-(n-1)\min_{t_i\in T}u_j(t_i, t^*, \dots, t^*).\]
From this we get
\[\arg\max_{t_i\in T}u_i(t_i, t^*, \dots, t^*)=\arg\min_{t_i\in T}u_j(t_i, t^*, \dots, t^*)=t^*.\]
We have
\[\max_{t_i\in T}u_i(t_i, t^*, \dots, t^*)=\min_{t_i\in T}u_j(t_i,t^*,\dots, t^*)=u_i(t^*, \dots, t^*)=0.\]
By symmetry
\[\max_{t_i\in T}u_i(t_i, t^*, \dots, t^*)=\min_{t_j\in T}u_i(t_j,t^*,\dots, t^*)=0.\]
Then,
\begin{align*}
&\min_{t_j\in T}\max_{t_i\in T}u_i(t_i, t_j, t^*,\dots, t^*)\leq \max_{t_i\in T}u_i(t_i, t^*, \dots, t^*)\\
&=\min_{t_j\in T}u_i(t_j, t^*,\dots,t^*)\leq \max_{t_i\in T}\min_{t_j\in T}u_i(t_i,t_j,t^*,\dots, t^*).
\end{align*}
From Lemma \ref{l3} we obtain
\begin{align}
&\min_{t_j\in T}\max_{t_i\in T}u_i(t_i, t_j, t^*,\dots, t^*)=\max_{t_i\in T}u_i(t_i, t^*,\dots, t^*)=\min_{t_j\in T}u_i(t_j, t^*,\dots, t^*)\label{l3-1}\\
&=\max_{t_i\in T}\min_{t_j\in T}u_i(t_i, t_j, t^*,\dots, t^*)=\min_{s_j\in S}\max_{t_i\in T}u_i(t_i, t_j(t_i, s_j,t^*,\dots, t^*), t^*,\dots, t^*)\notag \\
&=\max_{t_i\in T}\min_{s_j\in S}u_i(t_i, t_j(t_i, s_j, t^*,\dots, t^*), t^*,\dots, t^*)=0.\notag
\end{align}

\item Since any value of $s_j$ can be realized by appropriately choosing $t_j$,
\begin{align}
&\min_{s_j\in S}u_i(t^*,t_j(t^*, s_j, t^*,\dots, t^*), \dots, t^*)=\min_{t_j\in T}u_i(t^*,t_j,t^* \dots, t^*)\label{z1}\\
&=u_i(t^*,\dots,t^*)=0.\notag
\end{align}
Then,
\[\arg\min_{s_j\in S}u_i(t^*,t_j(t^*, s_j, t^*,\dots, t^*), t^*, \dots, t^*)=s^0(t^*).\]
(\ref{l3-1}) and (\ref{z1}) mean
\begin{align}
&\min_{s_j\in S}\max_{t_i\in T}u_i(t_i, t_j(t^*, s_j, t^*,\dots, t^*), t^*, \dots, t^*)\label{z2}\\
&=\min_{s_j\in S}u_i(t^*,t_j(t^*, s_j, t^*,\dots, t^*), t^*, \dots, t^*)=0.\notag
\end{align}
And we have
\[\max_{t_i\in T}u_i(t_i, t_j(t^*, s_j, t^*,\dots, t^*), t^*, \dots, t^*)\geq u_i(t^*, t_j(t^*, s_j, t^*,\dots, t^*), t^*, \dots, t^*).\]


Then,
\begin{align*}
&\arg\min_{s_j\in S}\max_{t_i\in T}u_i(t_i, t_j(t^*, s_j, t^*,\dots, t^*), t^*, \dots, t^*)\\
&=\arg\min_{s_j\in S}u_i(t^*, t_j(t^*, s_j, t^*,\dots, t^*), t^*, \dots, t^*)=s^0(t^*).
\end{align*}
Note $s^0(t^*)=f(t^*,t^*,\dots,t^*)$. 

Thus, by (\ref{z2})
\begin{align*}
&\min_{s_j\in S}\max_{t_i\in T}u_i(t_i, t_j(t^*, s_j, t^*,\dots, t^*), t^*, \dots, t^*)=\max_{t_i\in T}u_i(t_i, t_j(t_i, s^0(t^*), t^*,\dots, t^*), t^*, \dots, t^*)\\
&=\min_{s_j\in S}u_i(t^*,t_j(t^*, s_j, t^*,\dots, t^*), t^*, \dots, t^*)=u_i(t^*, t_j(t^*, s^0(t^*), t^*,\dots, t^*), t^*, \dots, t^*)=0.
\end{align*}


Therefore,
\begin{equation}
\arg\max_{t_i\in T}u_i(t_i, t_j(t_i, s^0(t^*), t^*,\dots, t^*), t^*, \dots, t^*)=t^*.\label{t1-2}
\end{equation}
This holds for all $i\in N,\ i\neq j$.

On the other hand, because any value of $s_j$ is realized by appropriately choosing $t_j$,
\[\max_{s_j\in S}u_j(t^*, t_j(t^*, s_j, t^*, \dots, t^*))=\max_{t_j\in T}u_j(t^*, t_j, t^*, \dots, t^*)=u_j(t^*, \dots, t^*)=0.\]
Therefore,
\begin{equation}
\arg\max_{s_j\in S}u_j(t^*, t_j(t^*, s_j, t^*, \dots, t^*))=s^0(t^*).\label{t1-1}
\end{equation}

From (\ref{t1-2}) and (\ref{t1-1}), $(t^*,s^0(t^*),t^*, \dots, t^*)$ is a Nash equilibrium which is equivalent to $(t^*,\dots,t^*)$. $(t^*,s^0(t^*),t^*, \dots, t^*)$ denotes an equilibrium where $t_i=t^*,\ s_j=s^0(t^*)$ and $t_k=t^*$ for $k\neq i, j$. 
\end{enumerate}
\end{proof}

Consider a Nash equilibrium where $m$ players choose $t^*$ and $n-m$ players choose $s^0(t^*)$. Let $\mathbf{t}_k$ be a vector of $t_k,\ k\in M$, of players other than $i$ and $j$ who choose $t_k$'s as their strategic variables; $\mathbf{t}_l$ be a vector of $t_l,\ l\in N-M$, of players who choose $s_l$'s as their strategic variables. These expressions mean that $t_i=t_j=t^*$; each $t_k=t^*$ and each $s_l=s^0(t^*)$. We write such an equilibrium as $(t_i, t_j, \mathbf{t}_k, \mathbf{t}_l,)=(t^*, t^*, \mathbf{t}^*_k, \mathbf{t}^*_l,)$. In the next theorem, based on Assumption \ref{as1}, we will show that such a Nash equilibrium is equivalent to a Nash equilibrium where $m-1$ players choose $t^*$ and $n-m+1$ players choose $s^0(t^*)$.

Now we assume
\begin{assumption}
At the equilibrium where $m$ players choose $t^*$ and $n-m$ players choose $s^0(t^*)$, the responses of $u_k$ and $u_l$ to a small change in $t_i$ have the same sign. 
\label{as1}
\end{assumption}
$u_k$ is the payoff of each player, other than $i$, whose strategic variable is $t_k$, and $u_l$ is the payoff of each player whose strategic variable is $s_l$.

\begin{quote}
When $t_i=t^*$ and $s_l=s^0(t^*)$ for $i\in M,\ l\in N-M$, we have $t_l=t^*$ for all $l\in N-M$. $u_k,\ k\in M\setminus i$ and $u_l,\ l\in N-M$ respond to a change in $t_i,\ i\in M$ given $t_k,\ k\in M\setminus i$ and $s_l,\ l\in N-M$. Since $s_k,\ k\in M\setminus i$ and $t_l,\ l\in N-M$ are not constant, the responses of $u_k,\ k\in M\setminus i$ and the responses of $u_l,\ l\in N-M$ to a change in $t_i,\ i\in M$ may be different. However, because all $t_i$'s are equal and all $u_i$'s for $i\in N$ are equal at the equilibrium, we may assume that the responses of $u_k,\ i\in M\setminus i$ and the responses of $u_l,\ l\in N-M$ to a change in $t_i,\ i\in M$ have the same sign in a sufficiently small neighborhood of the equilibrium.
\end{quote}

Using this assumption we show the following result.

\begin{theorem}
The equilibrium where $m$, $2\leq m\leq n-1$,  players choose $t_i$'s and $n-m$ players choose $s_i$'s as their strategic variables is equivalent to the equilibrium where $m-1$ players choose $t_i$'s and $n-m+1$ players choose $s_i$'s as their strategic variables.\label{t2}
\end{theorem}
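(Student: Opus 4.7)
The plan is to adapt the proof of Theorem~\ref{t1} to the setting where some of the other players already use $s$-strategies. Start from the given equilibrium $(t_i,t_j,\mathbf{t}_k,\mathbf{t}_l)=(t^*,t^*,\mathbf{t}^*_k,\mathbf{t}^*_l)$, pick one $t$-using player $j\in M$, and aim to show that replacing $j$'s strategy by $s_j=s^0(t^*)$ again gives an equilibrium. Since the underlying $(t_1,\ldots,t_n)$ vector is unchanged, by symmetry and zero-sum every payoff at the candidate new profile is zero; the content of the theorem is the verification of the best-response conditions for Players $i\in M\setminus\{j\}$ and $j$ under the new strategic-variable assignment.

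The first (and main) step is to establish
\[
\max_{t_i\in T}u_i(t_i,t^*,\mathbf{t}^*_k,\mathbf{t}^*_l)=\min_{t_j\in T}u_i(t^*,t_j,\mathbf{t}^*_k,\mathbf{t}^*_l)=0.
\]
The first equality is the given equilibrium condition. For the second, write $u_j=-\sum_{r\neq j}u_r$ via the zero-sum property; Assumption~\ref{as1}, applied to Player $j$ (an invocation legal by symmetry of the game), forces the responses of every other payoff to a small deviation in $t_j$ to share a common sign, so they are all minimized jointly with the maximizer of $u_j$. Since $t_j=t^*$ maximizes $u_j$, it minimizes $u_i$ as a function of $t_j$, giving the desired identity.

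The second step sandwiches this equality between $\min_{t_j}\max_{t_i}u_i$ and $\max_{t_i}\min_{t_j}u_i$ and invokes Lemma~\ref{l3} to extend the chain of equalities to the $s_j$-formulation:
\[
\min_{t_j\in T}\max_{t_i\in T}u_i=\max_{t_i\in T}\min_{t_j\in T}u_i=\min_{s_j\in S}\max_{t_i\in T}u_i\bigl(t_i,t_j(t_i,s_j,\mathbf{t}^*_k,\mathbf{t}^*_l),\mathbf{t}^*_k,\mathbf{t}^*_l\bigr)=0.
\]
The third step then reproduces the closing argument of Theorem~\ref{t1}: one shows $\arg\min_{s_j}\max_{t_i}u_i=s^0(t^*)$ and hence $\arg\max_{t_i}u_i(t_i,t_j(t_i,s^0(t^*),\mathbf{t}^*_k,\mathbf{t}^*_l),\mathbf{t}^*_k,\mathbf{t}^*_l)=t^*$, so Player $i$'s best response under the new strategic-variable assignment is still $t_i=t^*$. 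An analogous computation for $u_j$, using that any value of $s_j$ is realized by some $t_j$, yields Player $j$'s best response as $s_j=s^0(t^*)$. The best-response conditions for the remaining players in $N-M$ and $M\setminus\{i,j\}$ are unaffected because the realized $(t_1,\ldots,t_n)$ vector at the new profile coincides with the old one.

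The hard part is Step~1. In Theorem~\ref{t1} every other player used $t$-strategies, so the clean identity $u_i=-(n-1)u_j$ followed from symmetry alone and immediately located $\arg\min_{t_j}u_i$ at $t^*$. In the mixed setting the other players split into $t$-users (with $t_k$ fixed but $s_k$ moving in response to $t_j$) and $s$-users (with $s_l$ fixed but $t_l$ moving in response), so without extra information their local responses could diverge in sign and the $u_j=-\sum_{r\neq j}u_r$ decomposition would not pin down the minimizer of $u_i$. Assumption~\ref{as1} is precisely the minimal hypothesis that forces the joint-sign behavior required to transport the Theorem~\ref{t1} sign argument to this mixed setting; once it is granted, the rest of the proof is a mechanical repetition of the structure used for Theorem~\ref{t1}.
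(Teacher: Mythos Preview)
Your three-step plan is essentially the paper's own argument: use the zero-sum identity together with Assumption~\ref{as1} to obtain $\min_{t_j\in T}u_i(t^*,t_j,\mathbf{t}^*_k,\mathbf{t}^*_l)=0$ with argmin $t^*$, sandwich to get the saddle-point equality, invoke Lemma~\ref{l3} to pass to the $s_j$-formulation, and then read off $\arg\max_{t_i}u_i(\cdot)=t^*$ and $\arg\max_{s_j}u_j(\cdot)=s^0(t^*)$ exactly as in Theorem~\ref{t1}. The paper carries out Step~1 by first varying $t_i$ (getting $\min_{t_i}u_j=\min_{t_i}u_l=0$ from Assumption~\ref{as1}) and then swapping $i\leftrightarrow j$ by symmetry; you apply the assumption directly to a $t_j$-variation, which is the same content with the symmetry invocation moved to the front.

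There is one genuine slip. Your final sentence asserts that the best-response conditions for players in $N-M$ and in $M\setminus\{i,j\}$ are ``unaffected because the realized $(t_1,\ldots,t_n)$ vector at the new profile coincides with the old one.'' This justification is incorrect: the $t$-vectors coincide only \emph{at} the candidate profile, not under a unilateral deviation by one of these players. When some $k\in M\setminus\{i,j\}$ moves $t_k$ away from $t^*$, Player~$j$'s induced value $t_j(t_k,s^0(t^*),\ldots)$ now drifts (whereas in the old equilibrium $t_j$ was pinned at $t^*$), so $k$'s best-response problem has genuinely changed. The paper handles this not by an ``unaffected'' argument but by noting that the derivation of \eqref{n2} goes through for \emph{every} $i\neq j$; for the remaining $t$-users this is immediate by symmetry with your Player~$i$, and for the old $s$-users $l\in N-M$ it follows because in the new configuration $l$ and $j$ are symmetric $s$-users, so \eqref{n2-1} transfers to $l$. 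Replace your last clause with this symmetry argument and the proof is complete.
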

\begin{proof}
Suppose that Player $i$ chooses $t_i$ in both equilibria, but Player $j$ chooses $t_j$ when $m$ players choose $t_i$'s and he chooses $s_j$ when $m-1$ players choose $t_i$'s. Then,
\[\arg\max_{t_i\in T}u_i(t_i, t^*, \mathbf{t}^*_k,\mathbf{t}^*_l)=\arg\max_{t_j\in T}u_j(t^*, t_j, \mathbf{t}^*_k,\mathbf{t}^*_l)=t^*.\]
Since any value of $t_j$ is realized by appropriately choosing $s_j$, we get
\[\max_{s_j\in S}u_j(t^*, t_j(t^*, s_j, \mathbf{t}^*_k,\mathbf{t}^*_l), \mathbf{t}^*_k,\mathbf{t}^*_l)=\max_{t_j\in T}u_j(t^*, t_j, \mathbf{t}^*_k,\mathbf{t}^*_l)=u_j(t^*, t^*, \mathbf{t}^*_k,\mathbf{t}^*_l),\]
and
\begin{equation}
\arg\max_{s_j\in S}u_j(t^*, t_j(t^*, s_j, \mathbf{t}^*_k,\mathbf{t}^*_l), \mathbf{t}^*_k,\mathbf{t}^*_l)=s^0(t^*).\label{n2-1}
\end{equation}
Since the game is zero-sum,
\begin{align*}
u_i(t_i, t^*,\mathbf{t}^*_k,\mathbf{t}^*_l)+u_j(t_i, t^*,\mathbf{t}^*_k,\mathbf{t}^*_l)+(m-2)u_k(t_i, t^*,\mathbf{t}^*_k,\mathbf{t}^*_l)+(n-m)u_l(t_i,t^*,\mathbf{t}^*_k,\mathbf{t}^*_l)=0,\notag
\end{align*}
and so
\begin{align*}
u_i(t_i, t^*,\mathbf{t}^*_k,\mathbf{t}^*_l)=-[u_j(t_i, t^*,\mathbf{t}^*_k,\mathbf{t}^*_l)+(m-2)u_k(t_i, t^*,\mathbf{t}^*_k,\mathbf{t}^*_l)+(n-m)u_l(t_i,t^*,\mathbf{t}^*_k,\mathbf{t}^*_l)],\notag
\end{align*}
$u_k$ denotes the payoff of each player who chooses $t_k$ as its strategic variable. Player $j$ is one of such players. $u_l$ denotes the payoff of each player who chooses $s_l$ as its strategic variable. Then, we obtain
\begin{align*}
u_i(t_i, t^*,\mathbf{t}^*_k,\mathbf{t}^*_l)=-[(m-1)u_j(t_i, t^*,\mathbf{t}^*_k,\mathbf{t}^*_l)+(n-m)u_l(t_i,t^*,\mathbf{t}^*_k,\mathbf{t}^*_l)].\notag
\end{align*}
Thus,
\begin{align*}
\max_{t_i\in T}u_i(t_i, t^*,\mathbf{t}^*_k,\mathbf{t}^*_l)=-\min_{t_i\in T}[(m-1)u_j(t_i, t^*,\mathbf{t}^*_k,\mathbf{t}^*_l)+(n-m)u_l(t_i,t^*,\mathbf{t}^*_k,\mathbf{t}^*_l)].\notag
\end{align*}
By Assumption \ref{as1} since $u_i(t_i, t^*,\mathbf{t}^*_k,\mathbf{t}^*_l)\leq 0$,
\[u_j(t_i, t^*,\mathbf{t}^*_k,\mathbf{t}^*_l)\geq 0,\ u_l(t_i,t^*,\mathbf{t}^*_k,\mathbf{t}^*_l)\geq 0,\]
in any neighborhood of $(t^*, t^*,\mathbf{t}^*_k,\mathbf{t}^*_l)$. Thus, we have
\begin{subequations}
\begin{equation*}
\min_{t_i\in T}u_j(t_i, t^*,\mathbf{t}^*_k,\mathbf{t}^*_l)=0,\label{2-1}
\end{equation*}
\begin{equation}
\arg\min_{t_i\in T}u_j(t_i, t^*,\mathbf{t}^*_k,\mathbf{t}^*_l)=t^*,\label{2-2}
\end{equation}
\begin{equation*}
\min_{t_i\in T}u_l(t_i,t^*,\mathbf{t}^*_k,\mathbf{t}^*_l)=0,\label{2-3}
\end{equation*}
and
\begin{equation}
\arg\min_{t_i\in T}u_l(t_i,t^*,\mathbf{t}^*_k,\mathbf{t}^*_l)=t^*.\label{2-4}
\end{equation}
\end{subequations}
By symmetry
\begin{equation*}
\min_{t_j\in T}u_i(t^*,t_j,\mathbf{t}^*_k,\mathbf{t}^*_l)=0,
\end{equation*}
\begin{equation*}
\arg\min_{t_j\in T}u_i(t^*,t_j,\mathbf{t}^*_k,\mathbf{t}^*_l)=t^*.
\end{equation*}
Thus,
\[\max_{t_i\in T}u_i(t_i,t^*,\mathbf{t}^*_k,\mathbf{t}^*_l)=\min_{t_j\in T}u_i(t^*,t_j,\mathbf{t}^*_k,\mathbf{t}^*_l)=u_i(t^*,t^*,\mathbf{t}^*_k,\mathbf{t}^*_l)=0.\]
Then,
\begin{align*}
\min_{t_j\in T}\max_{t_i\in T}u_i(t_i,t_j,\mathbf{t}^*_k,\mathbf{t}^*_l)\leq \max_{t_i\in T}u_i(t_i,t^*,\mathbf{t}^*_k,\mathbf{t}^*_l)=\min_{t_j\in T}u_i(t^*,t_j,\mathbf{t}^*_k,\mathbf{t}^*_l)\leq \max_{t_i\in T}\min_{t_j\in T}u_i(t_i,t_j,\mathbf{t}^*_k,\mathbf{t}^*_l).
\end{align*}
From Lemma \ref{l3}
\begin{align}
&\min_{t_j\in T}\max_{t_i\in T}u_i(t_i,t_j,\mathbf{t}^*_k,\mathbf{t}^*_l)=\max_{t_i\in T}u_i(t_i,t^*,\mathbf{t}^*_k,\mathbf{t}^*_l)=\min_{t_j\in T}u_i(t^*,t_j,\mathbf{t}^*_k,\mathbf{t}^*_l)\label{4-22}\\
&=\max_{t_i\in T}\min_{t_j\in T}u_i(t_i,t_j,\mathbf{t}^*_k,\mathbf{t}^*_l)=\min_{s_j\in S}\max_{t_i\in T}u_i(t_i,t_j(t_i, s_j,\mathbf{t}^*_k,\mathbf{t}^*_l),\mathbf{t}^*_k,\mathbf{t}^*_l)\notag\\
&=\max_{t_i\in T}\min_{s_j\in S}u_i(t_i,t_j(t_i, s_j, \mathbf{t}^*_k,\mathbf{t}^*_l),\mathbf{t}^*_k,\mathbf{t}^*_l).\notag
\end{align}
Since any value of $t_j$ is realized by appropriately choosing $s_j$ given $s_i=s^0(t^*)$ for all $i\neq n$,
\begin{equation}
\min_{t_j\in T}u_i(t^*,t_j, \mathbf{t}^*_k,\mathbf{t}^*_l)=\min_{s_j\in S}u_i(t^*, t_j(t^*, s_j, \mathbf{t}^*_k,\mathbf{t}^*_l),\mathbf{t}^*_k,\mathbf{t}^*_l)=0.\label{l4-2}
\end{equation}
Thus,
\[\arg\min_{s_j\in S}u_i(t^*, t_j(t^*, s_j, \mathbf{t}^*_k,\mathbf{t}^*_l),\mathbf{t}^*_k,\mathbf{t}^*_l)=s^0(t^*).\]
From (\ref{4-22}) and (\ref{l4-2})
\begin{equation}
\min_{s_j\in T}\max_{t_i\in T}u_i(t^*,t_j(t^*, s_j,\mathbf{t}^*_k,\mathbf{t}^*_l),\mathbf{t}^*_k,\mathbf{t}^*_l)=\min_{s_j\in S}u_i(t^*,t_j(t^*,s_j,\mathbf{t}^*_k,\mathbf{t}^*_l),\mathbf{t}^*_k,\mathbf{t}^*_l)=0.\label{z3}
\end{equation}
And we have
\begin{equation*}
\max_{t_i\in T}u_i(t_i,t_j(t_i, s_j,\mathbf{t}^*_k,\mathbf{t}^*_l),\mathbf{t}^*_k,\mathbf{t}^*_l)\geq u_i(t_i,t_j(t_i, s_j,\mathbf{t}^*_k,\mathbf{t}^*_l),\mathbf{t}^*_k,\mathbf{t}^*_l).
\end{equation*}


Then,
\begin{align*}
\arg\min_{s_j\in S}\max_{t_i\in T}u_i(t_i,t_j(t_i, s_j,\mathbf{t}^*_k,\mathbf{t}^*_l),\mathbf{t}^*_k,\mathbf{t}^*_l)=\arg\min_{s_j\in S}u_i(t^*,t_j(t^*,s_j,\mathbf{t}^*_k,\mathbf{t}^*_l)),\mathbf{t}^*_k,\mathbf{t}^*_l)=s^0(t^*).
\end{align*}


By (\ref{z3}) we get
\begin{align*}
&\min_{s_j\in T}\max_{t_i\in T}u_i(t_i,t_j(t_i, s_j,\mathbf{t}^*_k,\mathbf{t}^*_l),\mathbf{t}^*_k,\mathbf{t}^*_l)=\max_{t_i\in T}u_i(t_i,t_j(t_i, s^0(t^*),\mathbf{t}^*_k,\mathbf{t}^*_l),\mathbf{t}^*_k,\mathbf{t}^*_l)\\
=&\min_{s_j\in S}u_i(t^*,t_j(t^*,s_j,\mathbf{t}^*_k,\mathbf{t}^*_l),\mathbf{t}^*_k,\mathbf{t}^*_l)=u_i(t^*,t_j(t^*,s^0(t^*),\mathbf{t}^*_k,\mathbf{t}^*_l)),\mathbf{t}^*_k,\mathbf{t}^*_l)=0.
\end{align*}


Therefore, 
\begin{equation}
\arg\max_{t_i\in T}u_i(t_i,t_j(t_i, s^0(t^*),\mathbf{t}^*_k,\mathbf{t}^*_l),\mathbf{t}^*_k,\mathbf{t}^*_l)=t^*.\label{n2}
\end{equation}
This holds for all $i\in N,\ i\neq j$.

From (\ref{n2-1}) and (\ref{n2}) $(t^*,t_j(t^*,s^0(t^*),\mathbf{t}^*_k,\mathbf{t}^*_l),\mathbf{t}^*_k,\mathbf{t}^*_l)$ is a Nash equilibrium which is equivalent to $(t^*,t^*,\mathbf{t}^*_k,\mathbf{t}^*_l)$, and hence it is equivalent to $(t^*,\dots,t^*)$. Note that $i$ and $j$ are arbitrary. 
\end{proof}

By mathematical induction this theorem means that the Nash equilibrium where one player chooses $t_i$ and $n-1$ players choose $s_i$'s as their strategic variables is equivalent to the Nash equilibrium where all players choose $t_i$'s as their strategic variables. Suppose that in the former equilibrium only Player $n$ chooses $t_n$ and the other players choose $s_i$'s as their strategic variables. Then, this equilibrium is denoted by $(s^0(t^*), \dots, s^0(t^*), t^*)$, and so
\[\arg\max_{s_i\in S}u_i(t_i(s_i,t_n,s^0(t^*), \dots, s^0(t^*)),t_n,s^0(t^*), \dots, s^0(t^*))=s^0(t^*),\mathrm{for}\ i\neq n,\]
\[\arg\max_{t_n\in T}u_n(t_i(s_i,t_n,s^0(t^*), \dots, s^0(t^*)),t_n,s^0(t^*), \dots, s^0(t^*))=t^*.\]
Since any value of $t_n$ is realized by appropriately choosing $s_n$,
\begin{align*}
&\max_{t_n\in T}u_n(t_i(s_i,t_n,s^0(t^*), \dots, s^0(t^*)),t_n,s^0(t^*), \dots, s^0(t^*))\\
&=\max_{s_n\in T}u_n(t_i(s_i,s_n,s^0(t^*), \dots, s^0(t^*)),t_n(s_i, s_n,s^0(t^*),\dots,s^0(t^*)),s^0(t^*), \dots, s^0(t^*)),
\end{align*}
and
\[\arg\max_{s_n\in T}u_n(t_i(s_i,s_n,\mathbf{t}^*_l),t_n(s_i, s_n,\mathbf{t}^*_l),\mathbf{t}^*_l)=s^0(t^*).\]
Then, $(t_i(s_i,s_n,\mathbf{t}^*_l),t_n(s_i,s_n,\mathbf{t}^*_l),\mathbf{t}^*_l)$ is a Nash equilibrium, in which all players choose $s^0(t^*)$. It is equivalent to $(t^*,\dots,t^*)$.

Summarizing the results we have shown 
\begin{theorem}
Nash equilibria in the following states are equivalent.
\begin{enumerate}
	\item All players choose $t_i,\ i\in \{1, \dots, n\}$ (as their strategic variables).
	\item Some players choose $t_i$'s and the other players choose $s_i$'s.
	\item All players  choose $s_i,\ i\in \{1, \dots, n\}$.
\end{enumerate}
\end{theorem}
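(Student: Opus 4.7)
The plan is to deduce the final theorem from Theorems~\ref{t1} and~\ref{t2} by induction on the number $m$ of players using $t_i$ as a strategic variable, decreasing $m$ from $n$ down to $0$. At each step the equilibrium produced by the induction hypothesis is converted, via one of the two previous theorems, into an equivalent Nash equilibrium with one additional player using $s_i$ in place of $t_i$. Transitivity of equivalence then collapses the entire chain of $n+1$ configurations --- the pure-$t$ profile, all mixed $(m,n-m)$ profiles with $1 \leq m \leq n-1$, and the pure-$s$ profile --- into a single equivalence class, which is exactly what the theorem asserts.

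The base step ($m=n \to m=n-1$) is furnished by Theorem~\ref{t1}: starting from the symmetric equilibrium $(t^*,\dots,t^*)$ whose existence is established there, that theorem already produces the equivalent equilibrium in which one player uses $s^0(t^*)$ and the remaining $n-1$ players use $t^*$. For the inductive step in the range $2 \leq m \leq n-1$, Theorem~\ref{t2} applied to an equilibrium of the form $(t^*,t^*,\mathbf{t}^*_k,\mathbf{t}^*_l)$ yields an equivalent equilibrium in which $m-1$ players use $t^*$ and $n-m+1$ players use $s^0(t^*)$. Iterating this descent from $m=n-1$ down to $m=1$ traverses every configuration of type~(ii) and establishes the equivalence between (i) and (ii).

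The final step $m=1 \to m=0$ lies outside the scope of Theorem~\ref{t2} and is handled exactly by the argument sketched in the paragraph immediately preceding the statement of the present theorem: starting from $(s^0(t^*),\dots,s^0(t^*),t^*)$, invertibility of $f_n$ and $g_n$ rewrites Player~$n$'s maximization over $t_n$ as a maximization over $s_n$, after which symmetry forces the unique maximizer to be $s^0(t^*)$, while the other players' best-response conditions are already stated in $s_i$. The resulting profile $(s^0(t^*),\dots,s^0(t^*))$ is therefore a Nash equilibrium of configuration~(iii), equivalent to $(t^*,\dots,t^*)$.

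The main obstacle is reconciling the gap left by Theorem~\ref{t2} at $m=1$: its proof invokes Assumption~\ref{as1} to compare the response of a residual $t$-user $u_k$ with that of an $s$-user $u_l$, and this comparison becomes vacuous when only one player still uses $t$. Fortunately the remaining conversion at $m=1$ is not a genuine strategic switch but merely a change of coordinate for the last $t$-user, so symmetry together with invertibility of $f_n$ and $g_n$ suffices without any additional monotonicity hypothesis, and the induction closes cleanly to give the three-way equivalence.
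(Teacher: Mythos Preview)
Your proposal is correct and follows exactly the paper's own argument: Theorem~\ref{t1} provides the base step $m=n\to n-1$, Theorem~\ref{t2} is iterated for $2\le m\le n-1$, and the paragraph immediately preceding the present theorem handles the final coordinate change $m=1\to 0$ via invertibility and symmetry. Your remark that Theorem~\ref{t2} cannot cover the last step because Assumption~\ref{as1} becomes vacuous there is a welcome clarification, but otherwise the route is identical to the paper's.
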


\section{Example of an asymmetric multi-players zero-sum game}

Consider a relative profit maximization game in an oligopoly with three firms producing differentiated goods\footnote{About relative profit maximization under imperfect competition please see \cite{mm}, \cite{ebl2}, \cite{eb2}, \cite{st}, \cite{eb1}, \cite{ebl1} and \cite{redondo}}. It is an example of multi-players zero-sum game with two strategic variables. The firms are A, B and C.  The strategic variables are the outputs and the prices of the goods of the firms.

We consider the following four cases.
\begin{enumerate}
	\item Case 1: All firms determine their outputs.

The inverse demand functions are
\[p_A=a-x_A-bx_B-bx_C,\]
\[p_B=a-x_B-bx_A-bx_C,\]
and
\[p_C=a-x_C-bx_A-bx_B,\]
where $0<b<1$. $p_A$, $p_B$ and $p_C$ are the prices of the goods of Firm A, B and C, and $x_A$, $x_B$ and $x_C$ are the outputs of them.

	\item Case 2: Firms A and B determine their outputs, and Firm C determines the price of its good.

From the inverse demand functions,
\[p_A=(1-b)a+b^2x_B-bx_B+b^2x_A-x_A+bp_C,\]
\[p_B=(1-b)a+b^2x_B-x_B+b^2x_A-bx_A+bp_C,\]
and
\[x_C=a-bx_B-bx_A-p_C\]
are derived.

	\item Case 3: Firms B and C determine the prices of their goods, and Firm A determines its output.

Also, from the above inverse demand functions, we obtain
\[p_A=\frac{(1-b)a+2b^2x_A-bx_A-x_A+bp_C+bp_B}{1+b},\]
\[x_B=\frac{(1-b)a+b^2x_A-bx_A+bp_C-p_B}{(1-b)(1+b)},\]
and
\[x_C=\frac{(1-b)a+b^2x_A-bx_A-p_C+bp_B}{(1-b)(1+b}.\]

	\item Case 4: All firms determine the prices of their goods.

From the inverse demand functions the direct demand functions are derived as follows;
\[x_A=\frac{(1-b)a-(1+b)p_A+b(p_A+p_C)}{(1-b)(1+2b)},\]
\[x_B=\frac{(1-b)a-(1+b)p_B+b(p_B+p_C)}{(1-b)(1+2b)},\]
and
\[x_C=\frac{(1-b)a-(1+b)p_C+b(p_A+p_B)}{(1-b)(1+2b)}.\]

\end{enumerate}

The (absolute) profits of the firms are 
\[\pi_A=p_Ax_A-c_Ax_A,\]
\[\pi_B=p_Bx_B-c_Bx_B,\]
and
\[\pi_C=p_Cx_C-c_Cx_C.\]
$c_A$, $c_B$ and $c_C$ are the constant marginal costs of Firm A, B and C. The relative profits of the firms are 
\[\varphi_A=\pi_A-\frac{\pi_B+\pi_C}{2},\]
\[\varphi_B=\pi_B-\frac{\pi_A+\pi_C}{2},\]
and
\[\varphi_C=\pi_C-\frac{\pi_A+\pi_B}{2}.\]
The firms determine the values of their strategic variables to maximize the relative profits. We see
\[\varphi_A+\varphi_B+\varphi_C=0,\]
so the game is zero-sum. 

We compare the equilibrium prices of the good of Firm B in four cases. Denote the value of $p_B$ in each case by $p_B^1,\ p_B^2,\ p_B^3$ and $p_B^4$. Then, we get
\[p_B^1=\frac{3bc_C-2b^2c_B+bc_B+4c_B+3bc_A+ab^2-5ab+4a}{(4-b)(b+2)},\]
\[p_B^2=\frac{A}{(4-b)(b+2)(3b+4)},\]
\[p_B^3=\frac{B}{(b+2)(b+4)(5b+4)},\]
and
\[p_B^4=\frac{3b^2c_C+3bc_C+4b^2c_B+7bc_B+4c_B+3b^2c_A+3bc_A-5ab^2+ab+4a}{(b+2)(5b+4)},\]
where
\[A=9b^2c_C+12bc_C-3b^3c_B+b^2c_B+16bc_B+16c_B-3b^3c_A+3b^2c_A+12bc_A+3ab^3-11ab^2-8ab+16a,\]
and
\begin{align*}
B=&6b^3c_C+21b^2c_C+12bc_C+b^3c_B+17b^2c_B+32bc_B+16c_B+3b^3c_A+15b^2c_A\\
&+12bc_A-5ab^3-19ab^2+8ab+16a.
\end{align*}
When $c_C=c_A$, they are
\[p_B^1=\frac{bc_B-2b^2c_B+4c_B+6bc_A+ab^2-5ab+4a}{(4-b)(b+2)},\]
\[p_B^2=\frac{b^2c_B-3b^3c_B+16bc_B+16c_B-3b^3c_A+12b^2c_A+24bc_A+3ab^3-11ab^2-8ab+16a}{(4-b)(b+2)(3b+4)},\]
\[p_B^3=\frac{b^3c_B+17b^2c_B+32bc_B+16c_B+9b^3c_A+36b^2c_A+24bc_A-5ab^3-19ab^2+8ab+16a}{(b+2)(b+4)(5b+4)},\]
and
\[p_B^4=\frac{4b^2c_B+7bc_B+4c_B+6b^2c_A+6bc_A-5ab^2+ab+4a}{(b+2)(5b+4)}.\]
Further when $c_C=c_B=c_A$, we get
\[p_B^1=p_B^2=p_B^3=p_B^4=\frac{2bc_A+c_A-ab+a}{b+2}.\]
We can show  the same result for the equilibrium prices of the goods of the other firms. Thus, in a fully symmetric game the four cases are equivalent.

It can be verified that this example with $c_A=c_B=c_C$ satisfies Assumption \ref{as1} in the sense that
\begin{quote}
the argmin (argument of the minimum) of the relative profit of Firm B with respect to the strategy of Firm A is equal to that of Firm C with the Nash equilibrium strategies of Firms B and C in Case 2 and Case 3. See (\ref{2-2}) and (\ref{2-4}).
\end{quote}

\section{Concluding Remarks}

In this paper we have shown that in a symmetric multi-players zero-sum game with two strategic variables, choice of strategic variables is irrelevant to the Nash equilibrium. In an asymmetric situation the Nash equilibrium depends on the choice of strategic variables by players other than two-players case\footnote{About two-players case please see \cite{st17}.
}. 

\section*{Acknowledgment}

This work was supported by Japan Society for the Promotion of Science KAKENHI Grant Number 15K03481  and 18K01594.

\end{document}